\newcolumntype{P}[1]{>{\centering\arraybackslash}p{#1}}
\definecolor{Ugreen}{HTML}{198a11}
\newtheorem*{theorem}{Theorem}
\definecolor{darkgreen}{rgb}{0.0, 0.5, 0.0}
\pretocmd{\includegraphics}{\hspace*{-0.7cm}}{}{}
\begin{document}

\title{Many-Body Time Evolution from a Correlation-Efficient Quantum Algorithm}

\author{Michael Rose}
\affiliation{Department of Chemistry and The James Franck Institute, The University of Chicago, Chicago, Illinois 60637, USA}

\author{David A. Mazziotti}
\email{damazz@uchicago.edu}
\affiliation{Department of Chemistry and The James Franck Institute, The University of Chicago, Chicago, Illinois 60637, USA}

\date{Submitted October 27, 2025}

\begin{abstract}
We introduce the correlation-efficient time-evolution (CETE) algorithm for simulating quantum many-body dynamics. CETE recasts each step of time evolution as a time-independent correlation problem: the ansatz begins from a mean-field single Slater determinant and is then correlated to capture the true time-evolved state. We derive this exact ansatz from a contraction of the time-dependent Schr{\"o}dinger equation onto the space of two electrons. Unlike conventional evolution by sequential short-time propagators, which must both correlate and decorrelate the state as the degree of correlation fluctuates in time, CETE correlates only once. This substantially reduces circuit depth, extending accessible simulation times on near-term quantum devices. We demonstrate the approach by simulating the time evolution of the hydrogen molecule's electronic wavefunction, highlighting the potential for the CETE algorithm to simulate strongly correlated systems on near-term devices.
\end{abstract}

\maketitle

\noindent\textit{Introduction—}Real-time dynamics governs a wide range of significant phenomena including charge and energy transfer~\cite{krausz_attosecond_2009, calegari_ultrafast_2014, kraus_measurement_2015, folorunso_attochemistry_2023, engel_evidence_2007, schouten_exciton-condensate-like_2025, delgado-granados_quantum_2025}, light--matter interactions~\cite{runge_density-functional_1984, onida_electronic_2002, casida_time-dependent_1996, yabana_time-dependent_1996, moitra_accurate_2023}, chemical reactions ~\cite{bowman_femtosecond_1989, potter_femtosecond_1992, pedersen_validity_1994}, protein dynamics~\cite{ lindorff-larsen_how_2011, hudait_hiv-1_2024}, and non-equilibrium phase transitions~\cite{cavalleri_femtosecond_2001, eichberger_snapshots_2010, aoki_nonequilibrium_2014}. Quantum computers promise to make the simulation of these processes tractable~\cite{feynman_simulating_1982, lloyd_universal_1996, childs_theory_2021}; yet this potential remains unrealized. Current devices accumulate errors as circuit depth increases, limiting accessible simulation times. To address this challenge, a broad family of variational and hybrid quantum algorithms has been proposed~\cite{ollitrault_molecular_2021, mclachlan_variational_1964, gurtin_variational_1965, broeckhove_equivalence_1988,  haegeman_time-dependent_2011, li_efficient_2017, yuan_theory_2019, barison_efficient_2021, lee_variational_2022, nys_ab-initio_2024, gentinetta_correcting_2025, mello_clifford_2025, zhang_adaptive_2025}. Central to these approaches is the choice of ansatz, trading exactness for efficiency. Conventional evolution by sequential short-time propagators is exact but inefficient, as the wavefunction must repeatedly correlate and decorrelate as the degree of correlation fluctuates in time \cite{lloyd_universal_1996, ganoe_notion_2024}.

In this Letter, we introduce the correlation-efficient time-evolution (CETE) algorithm, which recasts time evolution as a time-independent correlation problem~\cite{smart_quantum_2021, smart_verifiably_2024, warren_exact_2024, benavides-riveros_quantum_2024, head-marsden_quantum_2021}. At each time step, the CETE ansatz is initialized from a single Slater determinant $|\phi_0(t+\epsilon)\rangle$ and then correlated via a product of two-electron unitaries. This form emerges naturally from the time-dependent contracted Schr{\"o}dinger equation (TDCSE), which we prove in second quantization to be equivalent to the time-dependent Schr{\"o}dinger equation (TDSE)~\cite{nakatsuji_equation_1999, nakatsuji_equation_1976, mazziotti_contracted_1998}. By correlating only once per time step, the CETE ansatz avoids the inefficiency of conventional, sequential evolution. We demonstrate CETE by simulating the electronic dynamics of H$_2$ on ibm\_fez, measuring the one-particle reduced density matrix (1-RDM, $^1 \hat D$) and energy $\langle \hat H\rangle$ at extended simulation times.

\noindent\textit{Theory—}For molecules and materials that contain at most pairwise interactions, the Hamiltonian can be written as a Hermitian two-electron operator
\begin{align}
\hat H = \sum_{pqrs} {}^2K^{pq}_{rs} {\hat a}^\dagger_p {\hat a}^\dagger_q {\hat a}^{}_s {\hat a}^{}_r
\end{align}
\noindent where ${}^2K$ is the two-electron reduced Hamiltonian, containing both the one- and two-electron integral information. This structure allows the time-dependent Schr{\"o}dinger equation (TDSE)
\begin{align}
\left(\frac{d}{dt} + i \hat H \right) |\psi(t)\rangle = 0
\end{align}
\noindent to be contracted onto the space of two-electron operators, yielding the TDCSE
\begin{align}
\langle \psi(t)| {\hat a}^\dagger_p {\hat a}^\dagger_q {\hat a}^{}_s {\hat a}^{}_r \left(\frac{d}{dt} + i \hat H \right)  |\psi(t)\rangle = 0
\end{align}
\noindent In first quantization, Nakatsuji~\cite{nakatsuji_equation_1999} proved that the TDCSE is a necessary and sufficient condition for the TDSE. Here we prove this equivalence in second quantization.

\begin{theorem} The time-dependent Schr{\"o}dinger equation (TDSE) is satisfied if and only if the time-dependent contracted Schr{\"o}dinger equation (TDCSE) is satisfied.
\end{theorem}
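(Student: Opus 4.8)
The plan is to treat the two directions separately, with essentially all the content in the ``if'' implication. The ``only if'' direction is immediate: if $(\frac{d}{dt}+i\hat H)|\psi(t)\rangle$ is the zero vector, then so is its overlap with $\hat{a}^\dagger_p\hat{a}^\dagger_q\hat{a}_s\hat{a}_r|\psi(t)\rangle$ for every $p,q,r,s$, which is exactly the TDCSE.

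For the converse I would set $|R(t)\rangle\equiv(\frac{d}{dt}+i\hat H)|\psi(t)\rangle$, so that the TDSE is the statement $|R(t)\rangle=0$ while the TDCSE is the statement $\langle\psi(t)|\hat{a}^\dagger_p\hat{a}^\dagger_q\hat{a}_s\hat{a}_r|R(t)\rangle=0$ for all $p,q,r,s$, and first extract a few scalar consequences of the TDCSE by contracting its free indices. Setting $r=p$, $s=q$ and summing, the fermionic identity $\sum_{pq}\hat{a}^\dagger_p\hat{a}^\dagger_q\hat{a}_q\hat{a}_p=\hat N(\hat N-1)$ together with the definite electron number $N\geq 2$ of $|\psi\rangle$ gives $\langle\psi|R\rangle=0$; contracting against the two-electron reduced Hamiltonian ${}^2K^{pq}_{rs}$ reassembles $\hat H=\sum_{pqrs}{}^2K^{pq}_{rs}\hat{a}^\dagger_p\hat{a}^\dagger_q\hat{a}_s\hat{a}_r$ and gives $\langle\psi|\hat H|R\rangle=0$; and the Hermitian conjugate of the TDCSE, after relabeling, gives the mirror identities $\langle R|\hat{a}^\dagger_p\hat{a}^\dagger_q\hat{a}_s\hat{a}_r|\psi\rangle=0$, whence $\langle R|\psi\rangle=0$ and $\langle R|\hat H|\psi\rangle=0$.

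The heart of the proof is to lift these contracted identities to $|R(t)\rangle=0$, along the lines of the time-independent contracted--Schr\"odinger arguments of Nakatsuji and Mazziotti. The TDCSE is precisely the statement that $|R\rangle$ is orthogonal to $\mathcal V_2\equiv\mathrm{span}\{\hat{a}^\dagger_p\hat{a}^\dagger_q\hat{a}_s\hat{a}_r|\psi\rangle\}$ (the adjoint of a two-electron excitation operator is again of this form, and $|\psi\rangle\in\mathcal V_2$ by the number-operator identity above). Since $\hat H$ is a two-electron operator, $i\hat H|\psi\rangle\in\mathcal V_2$ as well, so the decomposition $|R\rangle=|\dot\psi\rangle+i\hat H|\psi\rangle$ shows that $|R\rangle$ equals the component of $|\dot\psi\rangle$ orthogonal to $\mathcal V_2$, and it remains to show that this component vanishes. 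In the two-electron problem relevant to the CETE demonstration this is immediate, because the operators $\hat{a}^\dagger_p\hat{a}^\dagger_q\hat{a}_s\hat{a}_r$ acting on $|\psi\rangle$ already span the whole accessible Hilbert space, so $\mathcal V_2$ is everything and $|R\rangle=0$; one then recovers $|\dot\psi\rangle=-i\hat H|\psi\rangle$, that is, the TDSE. For general electron number I would reduce the claim to Nakatsuji's first-quantized equivalence, cited as a known result, supplemented by the normalization condition $\langle\psi(t)|\psi(t)\rangle=1$, which forces $\mathrm{Re}\,\langle\psi|\dot\psi\rangle=0$.

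I expect this last point to be the main obstacle. In the time-independent contracted--Schr\"odinger equation the residual $(\hat H-E)|\psi\rangle$ automatically lies in $\mathcal V_2$, simply because $\hat H$ is a two-electron operator and $|\psi\rangle\in\mathcal V_2$, so orthogonality to $\mathcal V_2$ is decisive for any particle number. The genuinely new feature in the time-dependent setting is the kinetic term $|\dot\psi\rangle$, which need not be a two-electron image of $|\psi\rangle$, so one must argue separately that the TDCSE (together with normalization and the zero-, one-, and two-body contractions above) leaves no room for $|R\rangle$ to carry a component outside $\mathcal V_2$. That argument is transparent for the two-electron wavefunction studied here and is the crux of the general statement.
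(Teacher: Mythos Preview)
Your ``only if'' direction and the two-electron special case are fine, but the general-$N$ ``if'' direction is not actually proved: you correctly isolate the obstacle---that $|\dot\psi\rangle$ need not lie in $\mathcal V_2=\mathrm{span}\{\hat a^\dagger_p\hat a^\dagger_q\hat a_s\hat a_r|\psi\rangle\}$, so orthogonality of $|R\rangle$ to $\mathcal V_2$ does not by itself force $|R\rangle=0$---and then defer to Nakatsuji's first-quantized theorem. Since the point of the statement is precisely to give a second-quantized proof, this is circular; the scalar consequences you extract ($\langle\psi|R\rangle=0$, $\langle\psi|\hat H|R\rangle=0$, normalization) are not enough to close the gap for $N>2$.

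The paper's proof removes the obstacle rather than working around it. It enlarges the single-particle index set to include a discretized time label $t,\,t+\epsilon$, so that the finite-difference approximation to $\tfrac{d}{dt}$ is itself a one-body (hence two-body) second-quantized operator. Then the full anti-Hermitian operator $\tfrac{d}{dt}+i\hat H$ admits a representation $\sum_{pqrs}O^{pq}_{rs}\hat a^\dagger_p\hat a^\dagger_q\hat a_s\hat a_r$ with anti-Hermitian coefficients, and one particular linear combination of the TDCSE identities---weighting by $-O^{pq}_{rs}$ and summing---yields directly
\[
\langle\psi|\bigl(\tfrac{d}{dt}+i\hat H\bigr)^{\!\dagger}\bigl(\tfrac{d}{dt}+i\hat H\bigr)|\psi\rangle=0,
\]
i.e.\ $\|R\|^2=0$, hence $|R\rangle=0$ for any electron number. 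In your language, the time-discretization trick forces $|R\rangle\in\mathcal V_2$ (now built in the enlarged orbital space), so your orthogonality argument becomes decisive without any appeal to the first-quantized result.
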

\begin{proof}
Assuming the Hamiltonian $\hat H$ contains at most pairwise interactions, the combined action of the Hamiltonian and the time derivative $\frac{d}{dt}$ can be represented as a two-electron operator
\begin{align}
\left( \frac{d}{dt} + i \hat H \right) = \sum_{pqrs} O^{pq}_{rs} {\hat a}^\dagger_p {\hat a}^\dagger_q {\hat a}^{}_s {\hat a}^{}_r
\end{align}
\noindent in which the coefficients $O^{pq}_{rs}$ are anti-Hermitian and ${\hat a}^\dagger_i$ and ${\hat a}^{}_i$ are creation and annihilation operators. The index $i$ denotes the spin and spatial parts of the orbital as well as the time which is discretized into $t$ and $t+\epsilon$ with $\epsilon \rightarrow 0$ to represent the time derivative.

\indent Weighting the terms of the TDCSE by $-\left( O^{pq}_{rs} \right)$ and summing over all indices yields the variance of the TDSE
\begin{align}
\langle \psi(t)| \left(\frac{d}{dt} + i \hat H \right)^\dagger \left(\frac{d}{dt} + i \hat H \right)  |\psi(t)\rangle = 0
\end{align}

\pagebreak

\noindent The quantity $v^\dagger v$ vanishes if and only if $v$ vanishes. Thus, the TDCSE's satisfaction implies the TDSE's satisfaction.

The converse is immediate. If the TDSE is satisfied, its contraction by any two-electron operator $\langle \psi(t)| {\hat a}^\dagger_p {\hat a}^\dagger_q {\hat a}^{}_s {\hat a}^{}_r $ vanishes. Thus, the TDCSE is satisfied.
\end{proof}

Since $\frac{d}{dt} + i\hat H$ is anti-Hermitian, it is sufficient to express the TDCSE as its anti-Hermitian portion
\begin{align}
\langle \psi(t)| \left\{ {\hat a}^\dagger_p {\hat a}^\dagger_q {\hat a}^{}_s {\hat a}^{}_r , \left(\frac{d}{dt} + i \hat H \right) \right\} |\psi(t)\rangle = 0 .
\end{align}
\noindent From this equivalence a natural choice of ansatz emerges. For a small time step $\epsilon$, the evolved wavefunction can be written as an exponential of the time derivative
\begin{align}
    |\psi(t + \epsilon)\rangle = e^{\epsilon\frac{d}{dt}} |\psi(t) \rangle
\end{align}
\noindent To satisfy the TDSE, the action of $\frac{d}{dt}$ must cancel that of $i\hat H$. Consequently, the action of $\frac{d}{dt}$ can be minimally represented as an anti-Hermitian two-electron operator $i\hat S$; thus, the ansatz takes the form of a two-body unitary operator acting on the wavefunction at time $t$:
\begin{align}
|\psi(t + \epsilon)\rangle= e^{i \epsilon \hat S} |\psi(t)\rangle
\end{align}
\noindent Because the tangent space of this ansatz coincides with the two-electron test functions used in the contraction of the TDCSE, its stationarity ensures that the TDCSE—and therefore the TDSE—is satisfied.

For evolution over longer times, the ansatz takes the form of the product of two-electron unitaries for each $\epsilon$ time step. However, in practice, the depth of this ansatz can be reduced by correlating from a single Slater determinant $|\phi_0\rangle$, rather than starting from the previous wavefunction $|\psi(t)\rangle$. This yields an ansatz of the form
\begin{align}
|\phi_{m+1}(t)\rangle = \left( \prod_m e^{\,i\epsilon_m \hat S_m(t)} \right) |\phi_0\rangle .
\end{align}
\noindent With this substitution, the ansatz depth is bounded by the degree of correlation at the target time rather than the number of time steps; on noise-limited devices this decreases the effect of noise and enables longer evolution. Moreover, by retaining the two-electron unitary structure, this ansatz remains exact. This wavefunction $|\phi_{m+1}(t)\rangle$ can be efficiently prepared on a quantum device, by initializing the single Slater determinant $|\phi_0\rangle$ and then implementing each two-electron unitary via a Trotterized product formula~\cite{suzuki_generalized_1976, lloyd_universal_1996, childs_theory_2021}.

To perform time evolution on a quantum device, we propose the correlation-efficient time-evolution (CETE) algorithm (Table~\ref{tab:cqte}). An implementation of this algorithm is openly available \cite{rose_correlation-efficient-time-evolution-cete_nodate}. Starting at time $t$, we create a target wavefunction $|\chi(t+\epsilon)\rangle $ by evolving $|\psi(t)\rangle$ a step $\epsilon$ forward
\begin{align}
|\chi(t+\epsilon)\rangle = e^{-i\hat H\epsilon} |\psi(t)\rangle .
\end{align}
The practical selection of $\epsilon$ depends on several factors including the strength and type of noise on the quantum device and the time dependence of the Hamiltonian.

\begin{table}[t]
\caption{\justifying\raggedright CETE algorithm for the time evolution of systems with at most pairwise interactions}
\label{tab:cqte}
\begin{ruledtabular}
\begin{tabular}{l}
\textbf{CETE Algorithm} \\
\textbf{From} $t=t_0$ to $t=t_{\max}$: \\[2pt]
\quad Initialize $|\chi(t+\epsilon)\rangle = e^{-i\hat H\epsilon}|\psi(t)\rangle $ \\[2pt]
\quad Set $|\phi_0(t+\epsilon)\rangle \leftarrow |\phi_0\rangle$ \\[2pt]
\quad  Set $m \leftarrow 0$ \\[2pt]
\quad \textbf{While} $F_m < 1-\delta_{\rm{cutoff}}$ \textbf{:} \\
\quad\quad \textbf{1:} Evaluate $i\hat S_{m}$ \\
\quad\quad \textbf{2:} Update $|\phi_{m+1}(t+\epsilon)\rangle
= e^{i\epsilon_m \hat S_m}\,|\phi_m(t+\epsilon)\rangle$ \\
\quad\quad \textbf{3:} $m \leftarrow m+1$ \\
\quad Set $|\psi(t+\epsilon)\rangle \leftarrow |\phi_m(t+\epsilon)\rangle $ \\
\quad Set $t \leftarrow t+\epsilon$ \\
\end{tabular}
\end{ruledtabular}
\end{table}

\indent An ansatz for this time $|\phi_m(t+\epsilon) \rangle $ is then constructed iteratively through the action of two-electron unitaries. The $0^{\rm th}$ iteration ansatz is a single Slater determinant $|\phi_0\rangle $. The reference wavefunction $|\phi_0\rangle$ can be kept fixed for all time or updated at each time by selecting the Slater determinant that is most probable at the current time
\begin{align}
|\phi_{m+1}(t+\epsilon)\rangle = \left( \prod_m e^{\,i\epsilon_m \hat S_m(t+\epsilon)} \right) |\phi_0\rangle .
\end{align}
\noindent  These two-electron unitaries are constructed iteratively by setting each residual $i\hat S_{m+1}$ equal to the gradient of the fidelity between the target state and ansatz (see Supplemental Material)~\cite{barison_efficient_2021}
\begin{align}
iS^{pq;rs}_{m+1} \leftarrow \frac{d F_{m+1}}{diS^{pq;rs}_{m+1}}
\end{align}
\noindent The fidelity can be measured by first preparing $|\chi(t+\epsilon)\rangle$ on the device, then applying the inverse of each unitary in the ansatz, and finally measuring the probability of $|\phi_0\rangle$
\begin{align}
F_{m+1} = || \langle \phi_{m+1}(t+\epsilon) |\chi(t+\epsilon)\rangle ||^2 .
\end{align}
After the fidelity exceeds a predetermined cutoff, the ansatz $|\phi_{m+1}(t+\epsilon)\rangle$ becomes the definition of the wavefunction $|\psi(t+\epsilon)\rangle$ for further time evolution to $t+2\epsilon$ and so on. In the limit that the fidelity reaches unity, the ansatz reproduces exact time evolution. If the gradient vanishes before the cutoff is reached, sequential short-time propagators can be used for a time step.

\pagebreak

\begin{figure}[t]
    \centering
    \includegraphics[width=\columnwidth]{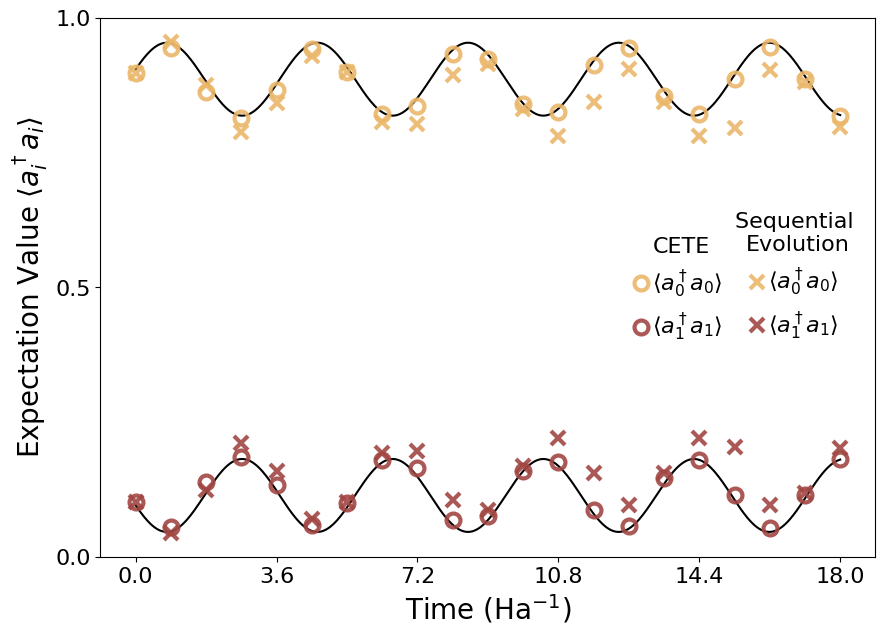}
    \caption{\justifying Measured diagonal 1-RDM $^1\hat D$ elements for H$_2$ obtained using the CETE ansatz and sequential evolution as a function of time. Pauli-sum tomography is performed at intervals of $0.9~\text{Ha}^{-1}$ using $10^4$ shots per Pauli string. All measurements are made on qubit 137 of ibm\_fez. The solid line indicates a noiseless state-vector reference. At all time points $\langle {\hat a}^\dagger_0 {\hat a}^{}_0 \rangle = \langle a_2^\dagger {\hat a}^{}_2 \rangle$ and $\langle {\hat a}^\dagger_1 {\hat a}^{}_1 \rangle = \langle {\hat a}^\dagger_3 {\hat a}^{}_3 \rangle$.}
    \label{fig:H2_semireal}
\end{figure}

\begin{figure}[t]
    \centering
    \includegraphics[width=\columnwidth]{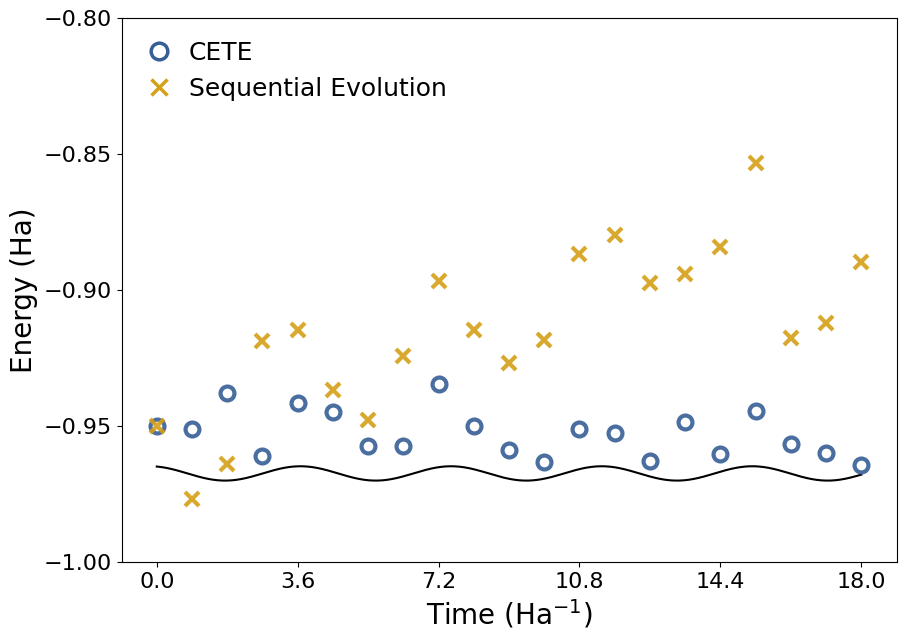}
    \caption{\justifying Measured energy values of H$_2$ obtained using the CETE ansatz $\boldsymbol{\circ}$ and sequential evolution $\boldsymbol{\times}$ as a function of time. Pauli-sum tomography is performed at intervals of $0.9~\text{Ha}^{-1}$ with $10^4$ shots per Pauli string. All measurements are made on qubit 137 of ibm\_fez. The solid line indicates a noiseless state-vector reference.}
    \label{fig:H2_energy}
\end{figure}

\noindent\textit{Applications—}We illustrate the CETE algorithm by simulating the time evolution of H$_2$ at a fixed bond distance of $0.735~\text{\AA}$ in the Slater-type orbital (STO-3G) basis set \cite{hehre_self-consistent_1969}. The electronic wavefunction is evolved for a total of $18~\text{Ha}^{-1}$ (435.4 as) using time steps of $0.90~\text{Ha}^{-1}$ that are Trotterized into substeps of $0.03~\text{Ha}^{-1}$. The initial state is prepared as
\begin{align}
|\psi(t_0)\rangle = e^{0.1 \pi \hat A}|0101\rangle ,
\end{align}
where $\hat A = i \left({\hat a}^\dagger_1 {\hat a}^\dagger_3 {\hat a}^{}_2 {\hat a}^{}_0 + {\hat a}^\dagger_0 {\hat a}^\dagger_2 {\hat a}^{}_3 {\hat a}^{}_1\right)$ is a double excitation from the ground state. Slater determinants correspond to the occupations of Hartree--Fock (HF) molecular orbitals and are encoded in spin-block, little-endian order. Because only the HF ground state $|0101\rangle$ and the double excitation $|1010\rangle$ are coupled during evolution, the problem reduces to a single qubit from the mappings $|0101\rangle \mapsto |0\rangle$ and $|1010\rangle \mapsto |1\rangle$. The 1-RDM ($^1D^i_j = \langle \psi|{\hat a}^\dagger_i{\hat a}^{}_j|\psi\rangle$), two-particle reduced density matrix (2-RDM,$^2D^{ij}_{kl} = \langle \psi|{\hat a}^\dagger_i{\hat a}^\dagger_j{\hat a}^{}_l{\hat a}^{}_k|\psi\rangle$) and Hamiltonian $\hat H$ are similarly reduced to the space of a single qubit. Restricting the mapping to Slater determinants with identical $\alpha$ and $\beta$ blocks enforces $\langle {\hat a}_0^\dagger {\hat a}^{}_0\rangle=\langle {\hat a}_2^\dagger {\hat a}^{}_2\rangle$ and $\langle {\hat a}_1^\dagger {\hat a}^{}_1\rangle=\langle {\hat a}_3^\dagger {\hat a}^{}_3\rangle$ symmetries in the diagonal of the 1-RDM where each diagonal element represents the probability of finding an electron in the given spin orbital.

For each $0.90~\text{Ha}^{-1}$ time step, the CETE ansatz is constructed on the noiseless AerSimulator \cite{aleksandrowicz_qiskit_2019}. The HF ground state is employed as the single Slater determinant reference state $|\phi_0\rangle$. The two-electron unitaries are implemented by mapping operators to Pauli sums under the Jordan--Wigner transformation, and the gradients are calculated analytically using $2\times10^4$ shots per term (see Supplemental Material)~\cite{brandwood_complex_1983, mitarai_quantum_2018, schuld_evaluating_2019, anselmetti_local_2021, kottmann_feasible_2021, wierichs_general_2022}. \\

\begin{figure}[b]
    \centering
    \includegraphics[width=\columnwidth]{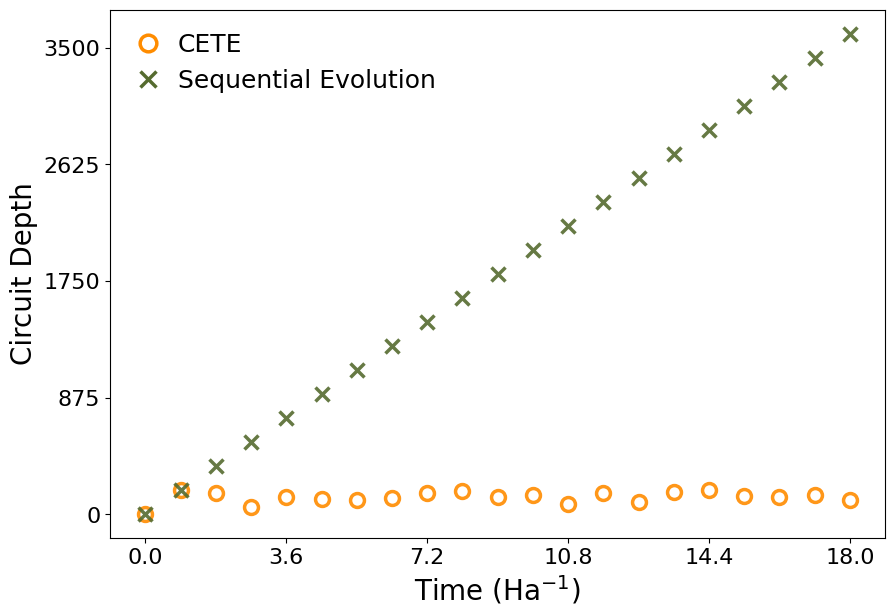}
    \caption{\justifying Depth of quantum circuits implementing the CETE ansatz $\boldsymbol{\circ}$ and sequential evolution $\boldsymbol{\times}$ as a function of time. Depths are evaluated at $0.9~\text{Ha}^{-1}$ intervals. The CETE ansatz for each time point is constructed using the noiseless AerSimulator.}
    \label{fig:H2_depth}
\end{figure}

To assess performance, we compare the CETE circuits against those produced by sequential short-time propagators. Pauli-sum tomography of the 1-RDM and energy are performed for both the ansatz and sequential evolution circuits,as shown in Figs.~\ref{fig:H2_semireal} and \ref{fig:H2_energy}, respectively. Qubit 137 on ibm\_fez is selected for its low readout error at the time of calculation. The consistently shallower depths of CETE circuits, as shown in Fig.~\ref{fig:H2_depth}, significantly suppress the effects of noise. The average and peak absolute errors of diagonal elements in the 1-RDM are reduced by factors exceeding four and five, respectively. Similarly, the average and peak absolute errors in the energy expectation values are reduced by factors greater than three. This reduction in depth arises because the CETE ansatz correlates the state once, rather than repeatedly correlating and decorrelating as the degree of correlation fluctuates in time.

\noindent\textit{Conclusions—}By reframing each step of time evolution as a time-independent correlation problem, CETE marks a paradigm shift: it retains exactness while bounding circuit depth by the degree of correlation. This correlation-bounded scaling extends simulation times and is essential for reaching nuclear time scales that are orders of magnitude longer than those of the electronic degrees of freedom. These time scales are necessary to capture the bond breaking and forming of chemical reactions and phase transitions. As this algorithm derives from the TDCSE-TDSE equivalence, it is universal for all Hamiltonians with at most pairwise interactions, enabling applications across a broad range of phenomena in both molecules and materials.

\begin{acknowledgments}
D. M. gratefully acknowledges the U.S. Department of Energy, Office of Science, Basic Energy Sciences under Award Number DE-SC0026076 for support.  We acknowledge the use of IBM Quantum services for this work. The views expressed are those of the authors, and do not reflect the official policy or position of IBM or the IBM Quantum team.
\end{acknowledgments}

\bibliography{abbrev, references_abbrev}

\pagebreak

\end{document}